\newcommand{\setft}[1]{\mathrm{#1}}
\newcommand{\lin}[1]{\setft{L}\left(#1\right)}
\def\I{\mathbb{1}}
\newenvironment{mylist}[1]{\begin{list}{}{
    \setlength{\leftmargin}{#1}
    \setlength{\rightmargin}{0mm}
    \setlength{\labelsep}{2mm}
    \setlength{\labelwidth}{8mm}
    \setlength{\itemsep}{0mm}}}
    {\end{list}}
\newcommand{\defeq}{\stackrel{\smash{\textnormal{\tiny def}}}{=}}
\def\cX{\mathcal{X}}\def\cY{\mathcal{Y}}
\def\rR{\mathrm{R}}
\def\A{\textsf{A}}
\newtheorem{thrm}{Theorem}[section]
\newtheorem{lem}[thrm]{Lemma}
\theoremstyle{definition}
\newtheorem{remark}[thrm]{Remark}
\numberwithin{equation}{section}
\newcounter{questionnumber}
\begin{document}

%================================================================================%
\title{An optimal problem for relative entropy}
%================================================================================%

\author{Fan Wang, Jun Zhu\footnote{E-mail: junzhu@yahoo.cn} and Lin Zhang\\[1mm]
  {\it\small Institute of Mathematics, Hangzhou Dianzi University, Hangzhou 310018,
  PR~China}}

\date{}
\maketitle

\begin{abstract}

Relative entropy is an essential tool in quantum information theory. There are so many problems which are related to relative entropy. In this article, the optimal values which are defined by $\displaystyle\max_{U\in{U(\cX_{d})}} S(U\rho{U^{\ast}}\parallel\sigma)$ and $\displaystyle\min_{U\in{U(\cX_{d})}} S(U\rho{U^{\ast}}\parallel\sigma)$ for two positive definite operators $\rho,\sigma\in{\textmd{Pd}(\cX)}$ are obtained.  And the set of $S(U\rho{U^{\ast}}\parallel\sigma)$ for every unitary operator $U$ is full of the interval $[\displaystyle\min_{U\in{U(\cX_{d})}} S(U\rho{U^{\ast}}\parallel\sigma),\displaystyle\max_{U\in{U(\cX_{d})}} S(U\rho{U^{\ast}}\parallel\sigma)]$.\\~\\
\textbf{Keywords:} Quantum operation; Unitary operator;Relative entropy

\end{abstract}

%================================================================================%
\section{Introduction}
%================================================================================%

In complex Euclidean spaces $\cX$, one writes $\lin{\cX,\cY}$ to refer to the collection of all linear mappings of the form
 $\A:\cX\rightarrow\cY$, denote $\lin{\cX,\cX}$ by $\lin{\cX}$, an operator $A\in\lin{\cX}$ is normal if and only if it commutes with its adjoint: $[A,A^\ast]$ , or equivalently $AA{^\ast}=A^{\ast}A$. An operator is \emph{positive semidefinite} $A\in\lin{\cX}$ if and only if it holds that $A=B^{\ast}B$ for some operator $B\in\lin{\cX}$. The collection of operators is denoted $\textmd{Pos}(\cX)=\{B^{\ast}B:B\in\lin{\cX}\}$. The notation $P\geq0$ is also used to mean that $P$ is positive semidefinite , while $A\geq{B}$ means that $A-B$ is positive semidefinite. A positive semidefinite operator $P\in{\textmd{Pos}(\cX)}$ is said to be positive definite if, in addition to being positive semidefinite, it is non-singular. We write
 $\textmd{Pd}(\cX)=\{A\in{\textmd{Pos}(\cX)}:det(A)\neq0\}$ to denote the set of such operators for a given complex Euclidean space $\cX$.

An unitary operator $A\in\lin{\cX,\cY}$ is a linear isometry if it preserves the Euclidean norm-meaning that $\parallel{Au}\parallel=\parallel{u}\parallel$ for all $u\in{\cX}$. The condition that $\parallel{Au}\parallel=\parallel{u}\parallel$ for all $u\in{\cX}$ is equivalent to $AA{^\ast}=\I_{\cX}$. The collection of unitary operator is denoted $U(\cX,\cY)=\{A\in\lin{\cX,\cY}:AA{^\ast}=\I_{\cX}\}$.

\begin{thrm}[Spectral Theorem\label{Lind}(\cite{{Watrous}})]
Let $\cX$ be a complex Euclidean space, let $A\in\lin{\cX}$ be a normal operator, and assume that the distinct eigencalues of $A$ are $\lambda_{1},\ldots,\lambda_{k}$. Then  there exist an orthogonal basis $\{x_{1},x_{2},\ldots,x_{n}\}$ of $\cX$, such that
$$
A=\sum_{i=1}^k \lambda_{i}x_{i}x_{i}^\ast.
$$
\end{thrm}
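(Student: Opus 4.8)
The plan is to prove the spectral theorem by induction on $\dim(\cX)$, extracting one eigenvalue at a time via the algebraic closure of $\complex$ and then using normality to split off a one-dimensional invariant subspace. The base case $\dim(\cX)=1$ is immediate, since every operator on a line is multiplication by a scalar. For the inductive step, assume the result holds for every complex Euclidean space of dimension strictly less than $n=\dim(\cX)$. First I would invoke the fundamental theorem of algebra: the characteristic polynomial $\det(\lambda\I-A)$ has a root, so $A$ admits at least one eigenvalue $\lambda$ together with a unit eigenvector $x$ satisfying $Ax=\lambda x$.

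The key step, and the only place where normality is genuinely used, is the lemma that $Ax=\lambda x$ forces $A^\ast x=\bar\lambda x$. I would establish this by first noting that $A-\lambda\I$ is again normal, because $A$ commutes with $A^\ast$ and hence $(A-\lambda\I)(A^\ast-\bar\lambda\I)=(A^\ast-\bar\lambda\I)(A-\lambda\I)$. For any normal operator $N$ one has $\norm{Nv}^2=\ip{v}{N^\ast N v}=\ip{v}{N N^\ast v}=\norm{N^\ast v}^2$. Applying this to $N=A-\lambda\I$ and $v=x$ gives $\norm{(A^\ast-\bar\lambda\I)x}=\norm{(A-\lambda\I)x}=0$, so $A^\ast x=\bar\lambda x$ as claimed.

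With this in hand I would show that the orthogonal complement $\cY=x^{\perp}$ is invariant under $A$. Indeed, for any $y\in\cY$ we have $\ip{x}{Ay}=\ip{A^\ast x}{y}=\ip{\bar\lambda x}{y}=\lambda\ip{x}{y}=0$, so $Ay\in\cY$; the symmetric computation shows $\cY$ is also $A^\ast$-invariant. Consequently the restriction $A|_{\cY}$ is a well-defined operator on the $(n-1)$-dimensional space $\cY$, and since for $u,v\in\cY$ one has $\ip{u}{(A|_{\cY})v}=\ip{(A^\ast|_{\cY})u}{v}$, its adjoint is $A^\ast|_{\cY}$; as $A$ and $A^\ast$ commute so do their restrictions, whence $A|_{\cY}$ is normal. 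The induction hypothesis then supplies an orthonormal basis $\{x_2,\ldots,x_n\}$ of $\cY$ consisting of eigenvectors of $A$, and adjoining $x_1:=x$ produces an orthonormal basis of $\cX$ made entirely of eigenvectors. Evaluating $\sum_{i=1}^{n}\lambda_i x_i x_i^\ast$ on each $x_j$ and using $x_i^\ast x_j=\ip{x_i}{x_j}=\delta_{ij}$ shows it agrees with $A$ on a basis, so $A=\sum_{i=1}^{n}\lambda_i x_i x_i^\ast$; grouping repeated values recovers the formula with the $k$ distinct eigenvalues $\lambda_1,\ldots,\lambda_k$.

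I expect the main obstacle to be exactly the normality lemma $A^\ast x=\bar\lambda x$. Without it, a general eigenvector of $A$ only guarantees that $x^{\perp}$ is $A^\ast$-invariant, which is not enough to restrict $A$ itself and so the induction cannot proceed; it is precisely the interplay between $A$ and $A^\ast$ afforded by normality that makes $x^{\perp}$ simultaneously invariant under both. Everything else, namely that the restriction of a normal operator to a jointly invariant subspace is normal and that the assembled eigenvectors form an orthonormal basis, is routine verification.
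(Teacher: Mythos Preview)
Your proof is correct and follows the standard inductive argument for the spectral theorem: extract an eigenvector via the fundamental theorem of algebra, use the normality identity $\norm{Nv}=\norm{N^\ast v}$ to deduce $A^\ast x=\bar\lambda x$, conclude that $x^\perp$ is invariant under both $A$ and $A^\ast$, and recurse. The verification that the restriction remains normal and that the rank-one sum reproduces $A$ is handled cleanly.

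As for comparison with the paper: there is nothing to compare. The paper does not prove this statement at all; it is quoted as a background result with a citation to Watrous's lecture notes. So your write-up goes strictly beyond what the paper provides. One cosmetic remark: the theorem as stated in the paper has a minor indexing inconsistency (the sum runs to $k$, the number of \emph{distinct} eigenvalues, while the basis has $n$ vectors), and your final sentence about ``grouping repeated values'' is the right way to reconcile the two readings.
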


von Neumann entropy and relative entropy are powerful tools in quantum information theory. The quantum relative entropy(or relative von Neumann entropy) is indispensable as a tool for the von Neumann entropy.

For two positive definite operators $P,Q\in{\textmd{Pd}(\cX)}$, then the von Neumann entropy of $P$ is defined by
$$
S(P)\defeq \textmd{Tr}(P\log(P)),
$$
the quantum relative entropy between $P$ and $Q$ is defined by:
$$
S(P\parallel{Q})=\textmd{Tr}(P\log(P))-\textmd{Tr}(P\log(Q)),
$$
we take the base 2 logarithm of corresponding eigenvalues. Specifically,when $im(P)\subsetneq{kerQ}$, we define $S(P\parallel{Q})=\infty$.

Let $B=[b_{ij}]$ be a $N\times{N}$ \emph{bi-stochastic matrix}, that is, $b_{ij}\geq0$, and
$$
\sum_{i=1}^k b_{ij}=\sum_{j=1}^k b_{ij}=1,
$$
for each $i,j=1,\ldots,N$. We only consider finite-dimensional complex Hilbert spaces, let $\mathbf{r},\mathbf{s}\in\rR^n$, $\mathbf{r}$ is \emph{majorised} by $\mathbf{s}$, written as $\mathbf{r}\prec\mathbf{s}$, if
$$
\sum_{i=1}^k r_{i}\leq{\sum_{j=1}^k s_{i}},\quad \quad 1\leq{k}\leq{n-1}
$$
and
$$
\sum_{i=1}^n r_{i}=\sum_{j=1}^n s_{i}.
$$

%=====================================================================================================================
\section{Description of the problems and main results}
%=====================================================================================================================
For two positive definite operators $P,Q\in{\textmd{Pd}(\cX)}$, here $\textmd{dim}(\cX_{d})=d$. We try to find the optimal values which are defined by
$$
\displaystyle\max_{U\in{U(\cX_{d})}} S(U\rho{U^{\ast}}\parallel\sigma) \quad and \quad \displaystyle\min_{U\in{U(\cX_{d})}} S(U\rho{U^{\ast}}\parallel\sigma)
$$
respectively. Since
$$
\displaystyle\max_{U\in{U(\cX_{d})}} S(U\rho{U^{\ast}}\parallel\sigma)=-S(\rho)-\displaystyle\min_{U\in{U(\cX_{d})}} \textmd{Tr}(U\rho{U^{\ast}}\log\sigma)
$$
and
$$
\displaystyle\min_{U\in{U(\cX_{d})}} S(U\rho{U^{\ast}}\parallel\sigma)=-S(\rho)-\displaystyle\max_{U\in{U(\cX_{d})}} \textmd{Tr}(U\rho{U^{\ast}}\log\sigma)
$$
Thus the above optimal problems are reduced to the followings:
$$
\displaystyle\max_{U\in{U(\cX_{d})}} \textmd{Tr}(U\rho{U^{\ast}}\log\sigma) \quad and \quad \displaystyle\min_{U\in{U(\cX_{d})}} \textmd{Tr}(U\rho{U^{\ast}}\log\sigma)
$$

Firstly, we assume that $\mathbf{supp}(\rho)\subseteq\mathbf{supp}(\sigma)$ and $\sigma$ is of full rank. By the spectral decomposition, we have that
$$
\sigma=\displaystyle\sum_{j} \lambda_{j}^\downarrow(\sigma)|y_{j}\rangle\langle{y_{j}}|=\displaystyle\sum_{j} \lambda_{j}^\uparrow(\sigma)|z_{j}\rangle\langle{z_{j}}|.
$$
For an orthonormal basis $\{|y_{j}\rangle\}$ of $\cX_{d}$ and $|y_{k}\rangle=|z_{d+1-k}\rangle$ for $k\in[d]=\{1,\ldots,d\}$.
Here $\lambda_{j}^\downarrow(\sigma)$ stands for eigenvalues are arranged in decreasing order and $\lambda_{j}^\uparrow(\sigma)$ stands for eigenvalues are arranged in increasing order.
If $[\rho,\sigma]\neq0$, from the spectral decomposition of $\rho$:
$$
\rho=\displaystyle\sum_{j} \lambda_{j}^\downarrow(\rho)|x_{j}\rangle\langle{x_{j}}|,
$$
where $\{|x_{j}\rangle\}$ is an orthonormal basis of $\cX_{d}$, we know that there exist two unitary operators $U,V:U=\displaystyle\sum_{j}|y_{j}\rangle\langle{x_{j}}|$ and $V=\displaystyle\sum_{j}|z_{j}\rangle\langle{x_{j}}|$, such that $U|x_{j}\rangle=|y_{j}\rangle$ for each $j$ , respectively.

Now
$$
\textmd{Tr}(U\rho{U^{\ast}}\log\sigma)=\displaystyle\sum_{j} \lambda_{j}^\downarrow(\rho)\log\lambda_{j}^\downarrow(\sigma)
$$
and
$$
\textmd{Tr}(V\rho{V^{\ast}}\log\sigma)=\displaystyle\sum_{j} \lambda_{j}^\downarrow(\rho)\log\lambda_{j}^\uparrow(\sigma).
$$

In mathematics, the rearrangement inequality states that
\begin{equation}
\begin{split}
x_{n}y_{1}+\ldots+x_{1}y_{n}\leq{x_{\pi(1)}y_{1}+\ldots+x_{\pi(n)}y_{n}}\leq{x_{1}y_{1}+\ldots+x_{n}y_{n}}
\end{split}
\end{equation}
for every choice of real numbers
$$
x_{1}\leq\ldots\leq{x_{n}} \quad and \quad y_{1}\leq\ldots\leq{y_{n}},
$$
and every permutation $x_{\pi(1)},\ldots,x_{\pi(n)}$ of $x_{1}\leq\ldots\leq{x_{n}}$. If the number are different, meaning that
$$
x_{1}<\ldots<{x_{n}} \quad and \quad y_{1}<\ldots<{y_{n}},
$$
then the lower bound is attained only for the permutation which reverses the order, i.e. $\pi(i)=n-i+1$ for all $i=1,\ldots,n$.

Note that the rearrangement inequality makes no assumptions on the signs of the real number.
\begin{lem}
Let $\mathbf{x}^\downarrow,\mathbf{y}^\downarrow,\mathbf{u}^\downarrow\in\rR^d$, all having their coordinates in decreasing order. If  $\mathbf{y}^\downarrow\prec\mathbf{x}^\downarrow$, i.e. $\mathbf{y}^\downarrow$ is \emph{majorized} by $\mathbf{x}^\downarrow$, then
$$
\langle\mathbf{u}^\downarrow,\mathbf{y}^\downarrow\rangle\leq\langle\mathbf{u}^\downarrow,\mathbf{x}^\downarrow\rangle.
$$
\end{lem}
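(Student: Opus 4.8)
The plan is to reduce the inequality to a statement about partial sums via summation by parts (Abel's identity), which is the standard route for deriving Hardy--Littlewood--Pólya type inequalities from majorization. Write $u_i,x_i,y_i$ for the (decreasingly ordered) coordinates of $\mathbf{u}^\downarrow,\mathbf{x}^\downarrow,\mathbf{y}^\downarrow$, set $d_i=x_i-y_i$, and introduce the partial sums $D_k=\sum_{i=1}^k d_i=\sum_{i=1}^k x_i-\sum_{i=1}^k y_i$ (with the convention $D_0=0$). The hypothesis $\mathbf{y}^\downarrow\prec\mathbf{x}^\downarrow$ says precisely that $D_k\ge 0$ for $1\le k\le d-1$ and $D_d=0$.

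First I would expand the target difference as $\langle\mathbf{u}^\downarrow,\mathbf{x}^\downarrow\rangle-\langle\mathbf{u}^\downarrow,\mathbf{y}^\downarrow\rangle=\sum_{i=1}^d u_i d_i$, and then rewrite this sum using $d_i=D_i-D_{i-1}$ and a shift of the summation index. Abel summation yields $\sum_{i=1}^d u_i d_i=\sum_{i=1}^{d-1}(u_i-u_{i+1})D_i+u_d D_d$. Next I would observe that every term on the right-hand side is nonnegative: $u_i-u_{i+1}\ge 0$ because $\mathbf{u}^\downarrow$ is arranged in decreasing order, $D_i\ge 0$ for $i\le d-1$ by the majorization relation, and the boundary term $u_d D_d$ vanishes since $D_d=0$. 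Consequently $\langle\mathbf{u}^\downarrow,\mathbf{x}^\downarrow\rangle-\langle\mathbf{u}^\downarrow,\mathbf{y}^\downarrow\rangle\ge 0$, which is the assertion.

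There is no substantial obstacle in this argument; the only point demanding care is the bookkeeping in the summation by parts, namely getting the index shift right and confirming that the boundary term drops out exactly because majorization forces the two total sums to coincide. As an alternative one could invoke Birkhoff's theorem to express $\mathbf{y}^\downarrow$ as a doubly stochastic image of $\mathbf{x}^\downarrow$ and argue from there, but the summation-by-parts proof is shorter and entirely self-contained, so that is the route I would take.
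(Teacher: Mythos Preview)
Your argument is correct. Both your proof and the paper's rest on the same mechanism --- controlling $\sum_i u_i(x_i-y_i)$ through the partial sums $D_k=\sum_{i\le k}(x_i-y_i)$ together with the nonnegative increments $u_i-u_{i+1}$ --- but the packaging differs. The paper sets things up as an induction on $d$: the case $d=2$ is checked by hand, and for general $d$ one rewrites $\sum_{j=1}^d(x_j-y_j)u_j=\sum_{j=1}^{d-1}(x_j-y_j)(u_j-u_d)$ and then asserts the bound $\ge (u_{d-1}-u_d)\,D_{d-1}\ge 0$. You instead carry out the full Abel summation in one stroke, obtaining $\sum_{k=1}^{d-1}(u_k-u_{k+1})D_k$ directly. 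Your route is cleaner and more self-contained: the paper's intermediate inequality is stated without justification, and since the individual terms $x_j-y_j$ need not be nonnegative it is not immediate --- indeed, verifying it amounts to exactly the Abel-summation identity you wrote down. In effect your proof is what the paper's induction unwinds to when executed in full.
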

\begin{proof}
For the completeness, we give the detailed proof. We make an induction argument in the following.

We denote the $\textmd{Tr}(\mathbf{x}^\downarrow)=\textmd{Tr}(\mathbf{y}^\downarrow)=s$, i.e. ,
$s=\displaystyle\sum_{j=1}^{d}\mathbf{x}_{j}^\downarrow=\displaystyle\sum_{j=1}^{d}\mathbf{y}_{j}^\downarrow$.If  $d=2$, by a simple algebra, we have
$$
\langle\mathbf{u}^\downarrow,\mathbf{y}^\downarrow\rangle\leq\langle\mathbf{u}^\downarrow,\mathbf{x}^\downarrow\rangle.
$$

In fact,
$$
\sum_{j=1}^2\mathbf{x}_{j}^\downarrow\mathbf{u}_{j}^\downarrow-\sum_{j=1}^2\mathbf{x}_{j}^\downarrow\mathbf{u}_{j}^\downarrow=
(su_{1}^\downarrow+x_{2}^\downarrow(u_{2}^\downarrow-
(su_{1}^\downarrow))-(su_{1}^\downarrow+y_{2}^\downarrow(u_{2}^\downarrow-u_{1}^\downarrow))=
(x_{2}^\downarrow-y_{2}^\downarrow)(u_{2}^\downarrow-u_{1}^\downarrow)\geq0
$$
Where the last line in the above inequality holds because $\mathbf{y}^\downarrow$ is \emph{majorized} by $\mathbf{x}^\downarrow$, i.e. ,
$$
x_{1}^\downarrow\geq{y_{1}^\downarrow}, \quad
s=\displaystyle\sum_{j}\mathbf{x}^\downarrow=\displaystyle\sum_{j}\mathbf{y}^\downarrow.
$$

Next we suppose that the inequality holds for $d-1$. Now we show that the inequality holds for $d$.
$$
\sum_{j=1}^d\mathbf{x}_{j}^\downarrow\mathbf{u}_{j}^\downarrow-\sum_{j=1}^d\mathbf{x}_{j}^\downarrow\mathbf{u}_{j}^\downarrow=
\sum_{j=1}^{d-1}(x_{j}^\downarrow-y_{j}^\downarrow)(u_{j}^\downarrow-u_{d}^\downarrow)
\geq(u_{d-1}^\downarrow-u_{d}^\downarrow)\sum_{j=1}^{d-1}(x_{j}^\downarrow-y_{j}^\downarrow)\geq0
$$
The proof is completed.
\end{proof}
\begin{remark}
The result obtained in Lemma 2.1 can be reformulated as:For two fixed d-dimensional real vector
$\mathbf{x}^\downarrow,\mathbf{u}^\downarrow\in\rR^d$, we have
$$
\displaystyle\max_{B} \langle\mathbf{u}^\downarrow\mid{B}\mid\mathbf{x}^\downarrow\rangle
=\langle\mathbf{u}^\downarrow,\mathbf{x}^\downarrow\rangle,
$$
where the above optimization is taken over the set of all bi-stochastic matrices of size $d\times{d}$.
\end{remark}

By employing the rearrangement inequality, we have that, for any permutation $\pi\in{S_{d}}$,
$$
\displaystyle\sum_{j} \lambda_{j}^\downarrow(\rho)\log\lambda_{j}^\downarrow(\sigma)
\geq\displaystyle\sum_{j} \lambda_{j}^\downarrow(\rho)\log\lambda_{\pi(j)}^\downarrow(\sigma)
\geq\displaystyle\sum_{j} \lambda_{j}^\downarrow(\rho)\log\lambda_{j}^\uparrow(\sigma).
$$

Thus from the above discussion , it is seen that the above questions are reduced to the case where $[\rho,\sigma]=0$
 and $\sigma>0$.

Now we can make an assumption:
$$
\rho=\displaystyle\sum_{j} \lambda_{j}^\downarrow(\rho)|j\rangle\langle{j}| \quad and \quad  \sigma=\displaystyle\sum_{j} \lambda_{j}^\downarrow(\sigma)|j\rangle\langle{j}|,
$$
where $\lambda_{j}^\downarrow(\sigma)>0$ for each $j$. We still show that

\begin{equation}
\begin{split}
\displaystyle\max_{U\in{U(\cX_{d})}} \textmd{Tr}(U\rho{U^{\ast}}\log\sigma)=\displaystyle\sum_{j} \lambda_{j}^\downarrow(\rho)\log\lambda_{j}^\downarrow(\sigma)
\end{split}
\end{equation}
\begin{equation}
\begin{split}
\displaystyle\min_{U\in{U(\cX_{d})}} \textmd{Tr}(U\rho{U^{\ast}}\log\sigma)=\displaystyle\sum_{j} \lambda_{j}^\downarrow(\rho)\log\lambda_{j}^\uparrow(\sigma)
\end{split}
\end{equation}

Now for any unitary operator $W\in{U(\cX_{d})}$,
$$
\textmd{Tr}(W\rho{W^{\ast}}\log\sigma)=\displaystyle\sum_{i,j} \lambda_{i}^\downarrow(\rho)\log\lambda_{j}^\downarrow(\sigma)
{\mid\langle{j}|W|{i}\rangle\mid}^{2}.
$$

Setting $D=[{\mid\langle{j}|W|{i}\rangle\mid}^{2}]$, we see that $D$ is a bi-stochastic matrix since $D=W\circ\overline{W}$,
where $\circ$ is \emph{Schur product}.

Now we have $\textmd{Tr}(W\rho{W^{\ast}}\log\sigma)=\displaystyle\sum_{j} \lambda_{j}^\downarrow(\rho)\alpha_{j}$, where $\alpha_{j}=\displaystyle\sum_{i} \mid\langle{j}|W|{i}\rangle\mid\lambda_{i}^\downarrow(\rho)=\displaystyle\sum_{i} D_{i,j}\lambda_{i}^\downarrow(\rho)$. This indicates that
$$
\alpha=D\lambda^\downarrow(\rho),
$$
where
$$
\alpha\defeq [\alpha_{1},\ldots,\alpha_{d}]^\intercal  \quad and \quad \lambda^\downarrow(\rho)\defeq [\lambda_{1}^\downarrow(\rho),\ldots,\lambda_{d}^\downarrow(\rho)]^\intercal,
$$
 which implies that
$$
\alpha\prec\lambda^\downarrow(\rho).
$$
Therefore
$$
\textmd{Tr}(W\rho{W^{\ast}}\log\sigma)=\langle\log\lambda^\downarrow(\rho)|D|\lambda^\downarrow(\rho)\rangle,
$$
where $\log\lambda^\downarrow(\rho)\defeq [\log\lambda_{1}^\downarrow(\rho),\ldots,\log\lambda_{d}^\downarrow(\rho)]^\intercal$.

In what follows,we show that
$$
\displaystyle\max_{U\in{U(\cX_{d})}} \textmd{Tr}(W\rho{W^{\ast}}\log\sigma)=
\displaystyle\max_{D}\langle\log\lambda^\downarrow(\rho)|D|\lambda^\downarrow(\rho)\rangle.
$$
It follows from Lemma 2.1 that
$$
\displaystyle\max_{D}\langle\log\lambda^\downarrow(\rho)|D|\lambda^\downarrow(\rho)\rangle
=\langle\log\lambda^\downarrow(\rho),\lambda^\downarrow(\rho)\rangle=\displaystyle\sum_{j=1}^{d} \lambda_{j}^\downarrow(\rho)\log\lambda_{j}^\downarrow(\sigma).
$$
Thus we are arriving at the following conclusion:
\begin{thrm}\label{th:Sentropy:equality}
Let $\cX$ be a complex Euclidean spaces and assume that $\rho,\sigma\in{\textmd{Pd}(\cX)}$ are positive definite operators on $\cX$. Then
$$
\displaystyle\max_{U\in{U(\cX_{d})}} S(U\rho{U^{\ast}}\parallel\sigma)=-S(\rho)-\langle\log\lambda^\downarrow(\rho),\lambda^\downarrow(\rho)\rangle.
$$

\end{thrm}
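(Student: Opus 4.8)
The plan is to recognize that the \emph{maximum} of the relative entropy is governed by the \emph{minimum} of the linear functional $U\mapsto\textmd{Tr}(U\rho U^{\ast}\log\sigma)$, the opposite extremum from the one producing the minimal relative entropy. The starting point is the identity
$$
S(U\rho U^{\ast}\parallel\sigma)=-S(\rho)-\textmd{Tr}(U\rho U^{\ast}\log\sigma),
$$
which holds because $\textmd{Tr}(U\rho U^{\ast}\log(U\rho U^{\ast}))=\textmd{Tr}(\rho\log\rho)$ is independent of $U$. Hence $\max_{U}S(U\rho U^{\ast}\parallel\sigma)=-S(\rho)-\min_{U}\textmd{Tr}(U\rho U^{\ast}\log\sigma)$, so the whole task is to compute the \emph{minimum} of the trace. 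As in the preceding discussion I would first pass to the reduced situation $[\rho,\sigma]=0$, $\sigma>0$, writing $\rho=\sum_{j}\lambda_{j}^\downarrow(\rho)|j\rangle\langle j|$ and $\sigma=\sum_{j}\lambda_{j}^\downarrow(\sigma)|j\rangle\langle j|$ in a common eigenbasis.

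For an arbitrary $W\in U(\cX_{d})$ I would expand
$$
\textmd{Tr}(W\rho W^{\ast}\log\sigma)=\sum_{i,j}\lambda_{i}^\downarrow(\rho)\log\lambda_{j}^\downarrow(\sigma)\,{\mid\langle j|W|i\rangle\mid}^{2}=\langle\log\lambda^\downarrow(\sigma)\mid D\mid\lambda^\downarrow(\rho)\rangle,
$$
where $D=W\circ\overline{W}$ is bi-stochastic, so that $\beta:=D\lambda^\downarrow(\rho)$ satisfies $\beta\prec\lambda^\downarrow(\rho)$. The problem thus becomes: minimize $\langle\log\lambda^\downarrow(\sigma),\beta\rangle$ over the permutohedron $\{\beta:\beta\prec\lambda^\downarrow(\rho)\}$. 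Since the logarithm is increasing, $\log\lambda^\downarrow(\sigma)$ has its entries in decreasing order, so this is again a rearrangement problem, but now I need the \emph{lower} end of the chain displayed after Remark 2.2, not the upper end supplied by Lemma 2.1.

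The decisive step is the mirror of Lemma 2.1: for a decreasing vector $\mathbf{u}^\downarrow$ and any $\beta\prec\mathbf{x}^\downarrow$ one has $\langle\mathbf{u}^\downarrow,\beta\rangle\geq\langle\mathbf{u}^\downarrow,\mathbf{x}^\uparrow\rangle$, the minimum being the oppositely sorted pairing. This follows from Lemma 2.1 applied to $-\log\lambda^\downarrow(\sigma)$ together with the rearrangement inequality, or equivalently from the Birkhoff--von Neumann theorem, which places the minimum of a linear functional over the bi-stochastic polytope at an order-reversing permutation. Applying it with $\mathbf{u}^\downarrow=\log\lambda^\downarrow(\sigma)$ and $\mathbf{x}^\downarrow=\lambda^\downarrow(\rho)$ gives
$$
\min_{U\in U(\cX_{d})}\textmd{Tr}(U\rho U^{\ast}\log\sigma)=\sum_{j}\lambda_{j}^\downarrow(\rho)\log\lambda_{j}^\uparrow(\sigma),
$$
and this value is genuinely attained, since the order-reversing permutation is unistochastic and is realized by the unitary $V=\sum_{j}|z_{j}\rangle\langle x_{j}|$ introduced earlier, for which $\textmd{Tr}(V\rho V^{\ast}\log\sigma)=\sum_{j}\lambda_{j}^\downarrow(\rho)\log\lambda_{j}^\uparrow(\sigma)$. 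Substituting into the identity above yields
$$
\max_{U\in U(\cX_{d})}S(U\rho U^{\ast}\parallel\sigma)=-S(\rho)-\sum_{j}\lambda_{j}^\downarrow(\rho)\log\lambda_{j}^\uparrow(\sigma)=-S(\rho)-\langle\log\lambda^\uparrow(\sigma),\lambda^\downarrow(\rho)\rangle,
$$
which is the asserted identity, with the understanding that the displayed right-hand side contains an evident misprint: the argument of the logarithm should be $\sigma$ rather than $\rho$, and the relevant ordering is the increasing one $\lambda^\uparrow(\sigma)$.

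The main obstacle is conceptual rather than computational: one must resist aligning the spectra of $\rho$ and $\sigma$, because alignment maximizes the trace and therefore yields the \emph{minimum} relative entropy; the maximum is obtained by \emph{anti-aligning} them, pairing the largest eigenvalue of $\rho$ with the smallest eigenvalue of $\sigma$. The one quantitative point requiring care is that the lower bound must hold for \emph{every} unitary, whereas $\{W\circ\overline{W}\}$ is only the unistochastic subset of the bi-stochastic matrices; this causes no loss, because the bi-stochastic minimum is already attained at a permutation matrix, which is unistochastic, so restricting to unitary-induced matrices does not raise the minimum.
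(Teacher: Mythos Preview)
Your argument is correct and uses exactly the same machinery as the paper: reduce to the commuting diagonal case, rewrite $\textmd{Tr}(W\rho W^{\ast}\log\sigma)$ as $\langle\log\lambda^\downarrow(\sigma)\mid D\mid\lambda^\downarrow(\rho)\rangle$ with $D=W\circ\overline{W}$ bi-stochastic, and then invoke majorization/rearrangement to identify the extremum over the Birkhoff polytope as a permutation, which is unistochastic and hence attainable by an actual unitary. There is no substantive methodological difference.

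Where you and the paper part ways is on \emph{which} extremum is being computed, and here you are right and the paper is internally inconsistent. The discussion the paper places immediately before the theorem applies Lemma~2.1 to obtain
\[
\max_{U}\textmd{Tr}(U\rho U^{\ast}\log\sigma)=\sum_{j}\lambda_{j}^\downarrow(\rho)\log\lambda_{j}^\downarrow(\sigma),
\]
i.e.\ the \emph{aligned} pairing; combined with $S(U\rho U^{\ast}\parallel\sigma)=-S(\rho)-\textmd{Tr}(U\rho U^{\ast}\log\sigma)$ this yields $\min_{U}S(U\rho U^{\ast}\parallel\sigma)$, not the maximum. The paper nonetheless records the result as a formula for $\max_{U}S$, and in addition writes $\rho$ where $\sigma$ is meant inside the logarithm. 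You correctly read the statement literally, compute the \emph{minimum} of the trace via the anti-aligned pairing $\sum_{j}\lambda_{j}^\downarrow(\rho)\log\lambda_{j}^\uparrow(\sigma)$, and flag both misprints. Your closing remark about unistochastic versus bi-stochastic matrices is also a point the paper glosses over: the optimum over all bi-stochastic $D$ sits at a permutation matrix, so no gap is introduced by the restriction to unitarily induced $D$. In short, same approach, but you have repaired the swapped labels and the $\rho/\sigma$ typo that the paper's own proof does not.
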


Similarly,we have
\begin{thrm}\label{th:Sentropy:equality}
Let $\cX$ be a complex Euclidean spaces and assume that $\rho,\sigma\in{\textmd{Pd}(\cX)}$ are positive definite operators on $\cX$. Then
$$
\displaystyle\min_{U\in{U(\cX_{d})}} S(U\rho{U^{\ast}}\parallel\sigma)=-S(\rho)-\langle\log\lambda^\downarrow(\rho),\lambda^\uparrow(\rho)\rangle.
$$
\end{thrm}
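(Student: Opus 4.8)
The plan is to mirror, essentially word for word, the argument that produced the previous theorem, interchanging throughout ``maximum'' with ``minimum'', the decreasing rearrangement $\lambda^{\downarrow}(\cdot)$ with the increasing one $\lambda^{\uparrow}(\cdot)$, and the upper with the lower bound of the rearrangement inequality. Since $S(U\rho U^{\ast}\parallel\sigma)=-S(\rho)-\textmd{Tr}(U\rho U^{\ast}\log\sigma)$ for every unitary $U$, the whole problem is an optimisation of $\textmd{Tr}(U\rho U^{\ast}\log\sigma)$. First I would repeat the two reductions made above: using the spectral decompositions of $\rho$ and $\sigma$ together with the conjugating unitaries $U$ and $V$ introduced earlier and the \emph{lower} bound in the rearrangement inequality, one gets, for every permutation $\pi\in S_{d}$,
$$
\sum_{j}\lambda_{j}^{\downarrow}(\rho)\log\lambda_{j}^{\uparrow}(\sigma)\ \le\ \sum_{j}\lambda_{j}^{\downarrow}(\rho)\log\lambda_{\pi(j)}^{\downarrow}(\sigma)\ \le\ \sum_{j}\lambda_{j}^{\downarrow}(\rho)\log\lambda_{j}^{\downarrow}(\sigma),
$$
so that it suffices to treat the commuting, full-rank case $\rho=\sum_{j}\lambda_{j}^{\downarrow}(\rho)\out{j}{j}$, $\sigma=\sum_{j}\lambda_{j}^{\downarrow}(\sigma)\out{j}{j}$ with all $\lambda_{j}^{\downarrow}(\sigma)>0$.

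In that situation, for an arbitrary unitary $W$ put $D=[\,\abs{\bra{j}W\ket{i}}^{2}\,]=W\circ\overline{W}$, which is bi-stochastic; then, exactly as in the paragraphs above, $\textmd{Tr}(W\rho W^{\ast}\log\sigma)=\langle\log\lambda^{\downarrow}(\sigma)\mid D\mid\lambda^{\downarrow}(\rho)\rangle$. The crucial ingredient is then the companion of Lemma~2.1 and of the Remark following it: for fixed $\mathbf{u}^{\downarrow},\mathbf{x}^{\downarrow}\in\rR^{d}$ and $B$ ranging over bi-stochastic matrices,
$$
\min_{B}\ \langle\mathbf{u}^{\downarrow}\mid B\mid\mathbf{x}^{\downarrow}\rangle=\langle\mathbf{u}^{\downarrow},\mathbf{x}^{\uparrow}\rangle,\qquad \max_{B}\ \langle\mathbf{u}^{\downarrow}\mid B\mid\mathbf{x}^{\downarrow}\rangle=\langle\mathbf{u}^{\downarrow},\mathbf{x}^{\downarrow}\rangle.
$$
I would obtain the first equality either by applying Lemma~2.1 to $-\mathbf{u}^{\downarrow}$, whose decreasing rearrangement is $-\mathbf{u}^{\uparrow}$, or directly: $B\mathbf{x}^{\downarrow}$ is majorised by $\mathbf{x}^{\downarrow}$, hence lies in the convex hull of the permutations of $\mathbf{x}^{\downarrow}$ (Birkhoff's theorem), so the linear functional $\mathbf{z}\mapsto\langle\mathbf{u}^{\downarrow},\mathbf{z}\rangle$ attains its extrema at permutation vectors and the rearrangement inequality singles out the order-reversing one for the minimum. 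Applied with $\mathbf{u}^{\downarrow}=\log\lambda^{\downarrow}(\sigma)$ (decreasing, since $\log$ is increasing) and $\mathbf{x}^{\downarrow}=\lambda^{\downarrow}(\rho)$, this gives $\min_{D}\langle\log\lambda^{\downarrow}(\sigma)\mid D\mid\lambda^{\downarrow}(\rho)\rangle=\langle\log\lambda^{\downarrow}(\sigma),\lambda^{\uparrow}(\rho)\rangle=\sum_{j}\lambda_{j}^{\downarrow}(\rho)\log\lambda_{j}^{\uparrow}(\sigma)$.

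Finally, this extreme value is already attained within the smaller class of matrices of Schur-product form $W\circ\overline{W}$, because every permutation matrix is of that form; concretely the unitary $V$ introduced above realises the value $\sum_{j}\lambda_{j}^{\downarrow}(\rho)\log\lambda_{j}^{\uparrow}(\sigma)$, so the relaxation from unitaries to bi-stochastic matrices is tight. Substituting the resulting optimal value of $\textmd{Tr}(U\rho U^{\ast}\log\sigma)$ into the relation between $S(U\rho U^{\ast}\parallel\sigma)$ and $\textmd{Tr}(U\rho U^{\ast}\log\sigma)$ recorded at the start of this section then yields the asserted formula. The one genuine obstacle I anticipate is this minimum version of Lemma~2.1 together with the attainment remark: one must check both that the extremiser is a permutation matrix (so that it has the form $W\circ\overline{W}$) and that the sign-reversed induction — or the Birkhoff/vertex argument — really does produce the increasing rearrangement $\lambda^{\uparrow}(\sigma)$; everything else is a sign-flipped transcription of the previous proof.
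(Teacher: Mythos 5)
Your proposal is correct and follows essentially the same route as the paper: the paper proves only the maximum case in detail and disposes of this minimum case with ``similarly,'' which is exactly the sign-flipped transcription you describe (reduce to the commuting full-rank case, pass to the bi-stochastic matrix $D=W\circ\overline{W}$, and invoke the minimum counterpart of Lemma~2.1, attained at the order-reversing permutation realized by the unitary $V$). Your explicit justification of that minimum lemma via $-\mathbf{u}^{\downarrow}$ or Birkhoff's theorem, and of attainment by permutation matrices, simply supplies details the paper leaves implicit; note also that your derived value $\sum_{j}\lambda_{j}^{\downarrow}(\rho)\log\lambda_{j}^{\uparrow}(\sigma)$ is what the theorem's displayed formula intends, the paper's $\lambda^{\uparrow}(\rho)$ there being a notational slip for $\sigma$.
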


%=====================================================================================================================
\section{Further result}
%=====================================================================================================================

In this section,  the value range of $S(U\rho{U^{\ast}}\parallel\sigma)$ is discussed.
We define the following sets
$$
M(\rho,\sigma)\defeq \{S(U\rho{U^{\ast}}\parallel\sigma):U\in{U(\cX_{d})}\},\quad \widetilde{M}(\rho,\sigma)\defeq
\{\textmd{Tr}(U\rho{U^{\ast}}\sigma):U\in{U(\cX_{d})}\}
$$

If the set $M(\rho,\sigma)$ is the interval
$$
[\displaystyle\min_{U\in{U(\cX_{d})}} S(U\rho{U^{\ast}}\parallel\sigma),\displaystyle\max_{U\in{U(\cX_{d})}} S(U\rho{U^{\ast}}\parallel\sigma)]
$$
it is equivalent to the following:
$$
\widetilde{M}=[\displaystyle\min_{U\in{U(\cX_{d})}} \textmd{Tr}(U\rho{U^{\ast}}\sigma),\displaystyle\max_{U\in{U(\cX_{d})}} \textmd{Tr}(U\rho{U^{\ast}}\sigma)]
$$

\begin{thrm}\label{th:Sentropy:equality}
Let $\cX$ be a complex Euclidean spaces and assume that $\rho,\sigma\in{\textmd{Pd}(\cX)}$ are positive definite operators on $\cX$. If we define the following sets
$$
\widetilde{M}(\rho,\sigma)\defeq\{\textmd{Tr}(U\rho{U^{\ast}}\sigma):U\in{U(\cX_{d})}\},
$$
then
$$
\widetilde{M}(\rho,\sigma)=[\displaystyle\min_{U\in{U(\cX_{d})}} \textmd{Tr}(U\rho{U^{\ast}}\sigma),\displaystyle\max_{U\in{U(\cX_{d})}} \textmd{Tr}(U\rho{U^{\ast}}\sigma)]
$$
\end{thrm}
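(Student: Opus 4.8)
The plan is to reduce the statement to elementary topology of the unitary group — essentially compactness plus the intermediate value theorem. Write $f\colon U(\cX_d)\to\real$ for the map $f(U)=\textmd{Tr}(U\rho U^{\ast}\sigma)$, so that $\widetilde{M}(\rho,\sigma)=f\bigl(U(\cX_d)\bigr)$. First I would check that $f$ is a well-defined real-valued function: since $\rho$ and $\sigma$ are positive definite, in particular Hermitian, $U\rho U^{\ast}$ is Hermitian, and the trace of a product of two Hermitian operators is real. Moreover $f$ is continuous, being a polynomial in the real and imaginary parts of the entries of $U$. Hence $\widetilde{M}(\rho,\sigma)$ is the continuous image of $U(\cX_d)$ in $\real$.

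Next I would invoke the two standard properties of $U(\cX_d)$ that do the real work. It is compact, being a closed and bounded subset of $\lin{\cX_d}$; and it is path-connected. For path-connectedness, given $U\in U(\cX_d)$, the Spectral Theorem applied to the normal operator $U$ provides an orthonormal eigenbasis with all eigenvalues on the unit circle, so $U=e^{iH}$ for some Hermitian $H$, and $t\mapsto e^{itH}$ is a continuous path in $U(\cX_d)$ from $\I_{\cX_d}$ to $U$; concatenating such paths connects any two unitaries (given $U_0,U_1$, write $U_0^{\ast}U_1=e^{iH}$ and take $t\mapsto U_0 e^{itH}$).

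Given these two facts the conclusion is immediate: $\widetilde{M}(\rho,\sigma)$ is a compact connected subset of $\real$, hence a closed bounded interval $[a,b]$; and since $f$ attains its extrema on the compact set $U(\cX_d)$ we get $a=\min_{U\in U(\cX_d)}\textmd{Tr}(U\rho U^{\ast}\sigma)$ and $b=\max_{U\in U(\cX_d)}\textmd{Tr}(U\rho U^{\ast}\sigma)$, which is exactly the asserted identity. More explicitly, one may pick $U_0,U_1$ achieving the minimum and maximum, set $e^{iH}=U_0^{\ast}U_1$, and apply the intermediate value theorem to the continuous function $t\mapsto f(U_0 e^{itH})$ on $[0,1]$, which runs from the minimum to the maximum value, so every intermediate value is attained.

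There is no serious obstacle; the content is just connectedness of the unitary group. The only points needing care are (a) confirming that $f$ is real-valued and continuous, so that ``interval'' is meaningful, and (b) using compactness to ensure the endpoints $a,b$ are genuinely attained (min and max, not merely inf and sup). Finally, to recover the statement about $M(\rho,\sigma)$ recorded before the theorem, note that $S(U\rho U^{\ast}\parallel\sigma)=-S(\rho)-\textmd{Tr}(U\rho U^{\ast}\log\sigma)$, where $S(\rho)$ does not depend on $U$ and $\log\sigma$ is a fixed Hermitian operator (here $\sigma>0$); applying the same argument to $U\mapsto\textmd{Tr}(U\rho U^{\ast}\log\sigma)$ shows $M(\rho,\sigma)$ is the corresponding closed interval, and the claimed equivalence between the two interval statements follows.
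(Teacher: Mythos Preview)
Your argument is correct and considerably more elegant than the paper's own proof. The paper proceeds by induction on the dimension $d$: it first reduces to simultaneously diagonal $\rho$ and $\sigma$, handles $d=2$ by an explicit one-parameter family of $2\times 2$ rotations $U=\begin{pmatrix}\sqrt{k}&\sqrt{1-k}\\ \sqrt{1-k}&-\sqrt{k}\end{pmatrix}$, and for the inductive step embeds these $2\times 2$ blocks into $n\times n$ unitaries to fill a chain of overlapping subintervals running from $\sum_i\lambda_i\mu_i$ down to $\sum_i\lambda_i\mu_{n-i+1}$, invoking the rearrangement inequality and the $(n-1)$-dimensional hypothesis along the way.

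By contrast, you bypass all of this with the single observation that $U(\cX_d)$ is compact and path-connected, so its continuous real-valued image is automatically a closed interval. Your route is shorter, avoids the eigenvalue bookkeeping entirely, and applies verbatim to any continuous $f:U(\cX_d)\to\real$ (in particular to $U\mapsto\textmd{Tr}(U\rho U^{\ast}\log\sigma)$, as you note). What the paper's approach buys in exchange is constructiveness: it exhibits, for each target value, an explicit unitary built from elementary $2\times 2$ rotations that realizes it, and it ties the argument directly to the explicit endpoint formulas $\sum_i\lambda_i^{\downarrow}(\rho)\lambda_i^{\downarrow}(\sigma)$ and $\sum_i\lambda_i^{\downarrow}(\rho)\lambda_i^{\uparrow}(\sigma)$ obtained earlier. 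Your proof only identifies the endpoints abstractly as $\min$ and $\max$, which is exactly what the theorem as stated asks for.
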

\begin{proof}
This problem is equivalent to
$$
\widetilde{M}(\rho,\sigma)=[\displaystyle\min_{U\in{U(\cX_{d})}} \textmd{Tr}(U\rho{U^{\ast}}\sigma),\displaystyle\max_{U\in{U(\cX_{d})}} \textmd{Tr}(U\rho{U^{\ast}}\sigma)]=[\displaystyle\sum_{i=1}^{n} \lambda_{i}^\downarrow(\rho)\log\lambda_{i}^\uparrow(\sigma),\displaystyle\sum_{i=1}^{n} \lambda_{i}^\downarrow(\rho)\log\lambda_{i}^\downarrow(\sigma)].
$$
If $d=n$ we assume that $\lambda^\downarrow(\rho)=(\lambda_{1},\ldots,\lambda_{n}) \quad and  \quad \lambda^\downarrow(\sigma)=(\mu_{1},\ldots,\mu_{n})$.Then the problem is equivalent to
$$
\widetilde{M}(\rho,\sigma)=[\displaystyle\sum_{i=1}^{n} \lambda_{i}\mu_{n-i+1},\displaystyle\sum_{i=1}^{n}\lambda_{i}\mu_{i}].
$$

We make an induction argument in the following. And we only need consider the diagonal form of $\rho$ and $\sigma$ for any unitary, what's more,  the value on diagonal are arranged in decreasing order.

Firstly, if $d=2$, the problem is equivalent to $\widetilde{M}(\rho,\sigma)=[\lambda_{1}\mu_{2}+\lambda_{2}\mu_{1},\lambda_{1}\mu_{1}+\lambda_{2}\mu_{2}]$ for $\lambda_{1},\lambda_{2}$ and
$\mu_{1},\mu_{2}$ are  eigenvalues of $\rho,\sigma\in{\textmd{Pd}(\cX)}$ , respectively. For any $k\in[0,1]$ , there exists unitary $U$, such that
\begin{equation*}
U\rho{U^\ast}=\begin{bmatrix}
k\lambda_{1}+(1-k)\lambda_{2} & \sqrt{k}\sqrt{1-k}(\lambda_{1}-\lambda_{2}) \\ \sqrt{k}\sqrt{1-k}(\lambda_{1}-\lambda_{2}
& k\lambda_{2}+(1-k)\lambda_{1}
\end{bmatrix},
\end{equation*}
where
\begin{equation*}
U=\begin{bmatrix}
\sqrt{k} & \sqrt{1-k} \\  \sqrt{1-k} & -\sqrt{k}
\end{bmatrix},  k\in[0,1].
\end{equation*}
Then
\begin{eqnarray*}
\textmd{Tr}(U\rho{U^{\ast}}\sigma) & = & [(1-k)\lambda_{2}+k\lambda_{1}]\mu_{1}+[k\lambda_{2}+(1-k)\lambda_{1}]\mu_{2} \\
& = & (1-k)(\lambda_{1}\mu_{2}+\lambda_{2}\mu_{1})+k(\lambda_{1}\mu_{1}+\lambda_{2}\mu_{2}). \\
\end{eqnarray*}

It is equivalent to $\textmd{Tr}(U\rho{U^{\ast}}\sigma)$ is convex combination of  $\lambda_{1}\mu_{2}+\lambda_{2}\mu_{1}$  and  $\lambda_{1}\mu_{1}+\lambda_{2}\mu_{2}$. Therefore, when $d=2$, $\widetilde{M}(\rho,\sigma)=[\lambda_{1}\mu_{2}+\lambda_{2}\mu_{1},\lambda_{1}\mu_{1}+\lambda_{2}\mu_{2}]$.

Next we suppose that the conclusion holds for$d=n-1$, i.e. if $d=n-1$,
\begin{equation}
\begin{split}
\widetilde{M}(\rho,\sigma)=[\displaystyle\sum_{i=1}^{n-1} \lambda_{i}^\downarrow(\rho)\log\lambda_{i}^\uparrow(\sigma),\displaystyle\sum_{i=1}^{n-1} \lambda_{i}^\downarrow(\rho)\log\lambda_{i}^\downarrow(\sigma)]=[\displaystyle\sum_{i=1}^{n-1} \lambda_{i}\mu_{n-i},\displaystyle\sum_{i=1}^{n-1} \lambda_{i}\mu_{i}]
\end{split}
\end{equation}
where $\lambda^\downarrow(\rho)=(\lambda_{1},\ldots,\lambda_{n-1}) \quad and  \quad \lambda^\downarrow(\sigma)=(\mu_{1},\ldots,\mu_{n-1})$.

Now we should to prove the conclusion holds for $d=n$.

From (2.1), we know that
\begin{eqnarray*}
\displaystyle\sum_{i=1}^{n}\lambda_{i}^\downarrow(\rho)\log\lambda_{i}^\downarrow(\sigma)
&   =  & \lambda_{1}\mu_{1}+\lambda_{2}\mu_{2}+\lambda_{3}\mu_{3}+\ldots+\lambda_{n}\mu_{n} \\
& \geq & \lambda_{2}\mu_{1}+\lambda_{1}\mu_{2}+\lambda_{3}\mu_{3}+\ldots+\lambda_{n}\mu_{n} \\
& \geq & \lambda_{2}\mu_{1}+\lambda_{3}\mu_{2}+\lambda_{1}\mu_{3}+\ldots+\lambda_{n}\mu_{n} \\
& \geq & \ldots  \\
& \geq & \lambda_{2}\mu_{1}+\lambda_{3}\mu_{2}+\lambda_{4}\mu_{3}+\ldots+\lambda_{1}\mu_{n-1}+\lambda_{n}\mu_{n} \\
& \geq & \lambda_{2}\mu_{1}+\lambda_{3}\mu_{2}+\lambda_{4}\mu_{3}+\ldots+\lambda_{n}\mu_{n-1}+\lambda_{1}\mu_{n} \\
\end{eqnarray*}
For the first inequality,
\begin{eqnarray*}
\lambda_{1}\mu_{1}+\lambda_{2}\mu_{2}+\lambda_{3}\mu_{3}+\ldots+\lambda_{n}\mu_{n} \\
\geq  \lambda_{2}\mu_{1}+\lambda_{1}\mu_{2}+\lambda_{3}\mu_{3}+\ldots+\lambda_{n}\mu_{n}
\end{eqnarray*}
there exists an unitary $U$, for any $k\in[0,1]$, such that
\begin{equation*}
U\rho{U^\ast}=\begin{bmatrix}
k\lambda_{1}+(1-k)\lambda_{2} & \sqrt{k}\sqrt{1-k}(\lambda_{1}-\lambda_{2}) & 0 &  0 & \ldots & 0 & \\
\sqrt{k}\sqrt{1-k}(\lambda_{1}-\lambda_{2}& k\lambda_{2}+(1-k)\lambda_{1} & 0 &  0 & \ldots & 0 & \\
0 & 0 & \lambda_{3} & 0 & \ldots & 0 & \\
0 & 0 & 0 & \lambda_{4} & \ldots & 0 & \\
\ldots & \ldots & \ldots & \ldots & \ldots & \ldots & \\
0 & 0 & 0 & 0 & \ldots & \lambda_{n} & \\
\end{bmatrix},
\end{equation*}
where
\begin{equation*}
U=\begin{bmatrix}
\sqrt{k} & \sqrt{1-k} & 0  \\  \sqrt{1-k} & -\sqrt{k} & 0 \\
0& 0 & E \\
\end{bmatrix},
\end{equation*}
and $E$ is Unit matrix.

This is equivalent to $d=2$, thus $\textmd{Tr}(U\rho{U^{\ast}}\sigma)$ be full of the interval $[\lambda_{1}\mu_{2}+\lambda_{2}\mu_{1}+\displaystyle\sum_{i=3}^n \lambda_{i}\mu_{i},\lambda_{1}\mu_{1}+\lambda_{2}\mu_{2}+\displaystyle\sum_{i=3}^n \lambda_{i}\mu_{i}]$.
Therefore $\textmd{Tr}(U\rho{U^{\ast}}\sigma)$ be full of the interval
$$
[\displaystyle\sum_{i=1}^{k-1} \lambda_{i+1}\mu_{i}+\lambda_{1}\mu_{k}+\lambda_{k+1}\mu_{k+1}+\displaystyle\sum_{i=k+2}^n \lambda_{i}\mu_{i},\displaystyle\sum_{i=1}^{k-1} \lambda_{i+1}\mu_{i}+\lambda_{k+1}\mu_{k}+\lambda_{1}\mu_{k+1}+\displaystyle\sum_{i=k+2}^n \lambda_{i}\mu_{i}]
$$
where $k\in\{1,\ldots,n\}$.

Thus $\textmd{Tr}(U\rho{U^{\ast}}\sigma)$ be full of the interval
\begin{equation}
\begin{split}
[\displaystyle\sum_{i=1}^{n-1} \lambda_{i+1}\mu_{n-i}+\lambda_{1}\mu_{n},\displaystyle\sum_{i=1}^{n-1} \lambda_{i+1}\mu_{i}+\lambda_{1}\mu_{n}]
\end{split}
\end{equation}.

 From (3.1) we know that if $d=n-1$ , for $\lambda^\downarrow(\rho)=(\lambda_{2},\ldots,\lambda_{n}) \quad and  \quad \lambda^\downarrow(\sigma)=(\mu_{1},\ldots,\mu_{n-1})$,we have  $\widetilde{M}(\rho,\sigma)=[\displaystyle\sum_{i=1}^{n-1} \lambda_{i+1}\mu_{n-i},\displaystyle\sum_{i=1}^{n-1} \lambda_{i+1}\mu_{i}]$.Then when $d=n$ , $\textmd{Tr}(U\rho{U^{\ast}}\sigma)$ be full of the interval
 $[\displaystyle\sum_{i=1}^{n-1} \lambda_{i+1}\mu_{n-i}+\lambda_{1}\mu_{n},\displaystyle\sum_{i=1}^{n-1} \lambda_{i+1}\mu_{i}+\lambda_{1}\mu_{n}]$.

According to (3.1) and (3.2) , $\textmd{Tr}(U\rho{U^{\ast}}\sigma)$ be full of the interval
$$
[\displaystyle\sum_{i=1}^{n-1} \lambda_{i+1}\mu_{n-i}+\lambda_{1}\mu_{n},\displaystyle\sum_{i=1}^{n-1} \lambda_{i+1}\mu_{i}+\lambda_{1}\mu_{n}]\cup
[\displaystyle\sum_{i=1}^{n-1} \lambda_{i+1}\mu_{n-i}+\lambda_{1}\mu_{n},\displaystyle\sum_{i=1}^{n-1} \lambda_{i+1}\mu_{i}+\lambda_{1}\mu_{n}]
$$
Therefore $\textmd{Tr}(U\rho{U^{\ast}}\sigma)$ be full of the interval
$$
[\displaystyle\sum_{i=1}^{n-1} \lambda_{i+1}\mu_{n-i}+\lambda_{1}\mu_{n},\displaystyle\sum_{i=1}^{n-1} \lambda_{i+1}\mu_{i}+\lambda_{1}\mu_{n}]=
[\displaystyle\sum_{i=1}^{n} \lambda_{i}\mu_{n-i+1},\displaystyle\sum_{i=1}^{n}\lambda_{i}\mu_{i}]
$$

According to (2.2) and (2.3) ,
$$
\displaystyle\max_{U\in{U(\cX_{d})}} \textmd{Tr}(U\rho{U^{\ast}}\sigma)=\displaystyle\sum_{j=1}^{n} \lambda_{j}^\downarrow(\rho)\lambda_{j}^\downarrow(\sigma)=\displaystyle\sum_{i=1}^{n}\lambda_{i}\mu_{i}
$$
and
$$
\displaystyle\min_{U\in{U(\cX_{d})}} \textmd{Tr}(U\rho{U^{\ast}}\sigma)=\displaystyle\sum_{j=1}^{n} \lambda_{j}^\downarrow(\rho)\lambda_{j}^\uparrow(\sigma)=\displaystyle\sum_{i=1}^{n} \lambda_{i}\mu_{n-i+1}.
$$
Therefore $\widetilde{M}(\rho,\sigma)=[\displaystyle\sum_{i=1}^{n} \lambda_{i}\mu_{n-i+1},\displaystyle\sum_{i=1}^{n}\lambda_{i}\mu_{i}]$.
The proof is completed.
\end{proof}

Therefore we have
\begin{eqnarray*}
M(\rho,\sigma) & = & [\displaystyle\min_{U\in{U(\cX_{d})}} S(U\rho{U^{\ast}}\parallel\sigma),\displaystyle\max_{U\in{U(\cX_{d})}} S(U\rho{U^{\ast}}\parallel\sigma)] \\
& = & [-S(\rho)-\langle\lambda^\downarrow(\rho),\log\lambda^\downarrow(\rho)\rangle,
-S(\rho)-\langle\lambda^\downarrow(\rho),\log\lambda^\uparrow(\rho)\rangle].\\
\end{eqnarray*}
Then $S(U\rho{U^{\ast}}\parallel\sigma)$ be full of the interval $[\displaystyle\min_{U\in{U(\cX_{d})}} S(U\rho{U^{\ast}}\parallel\sigma),\displaystyle\max_{U\in{U(\cX_{d})}} S(U\rho{U^{\ast}}\parallel\sigma)].$

%=============================================================================%
\section{A generalization}
%=============================================================================%

In fact, the above two optimal problems can be easily generalized as follows:

Given two operators $\rho,\sigma\in{\textmd{Pd}(\cX)}$ are positive definite operators on $\cX$. If $\omega\prec\rho$, then
$$
max/min\{S(\omega\parallel\sigma):\omega\prec\rho\}=max/min\{S(\Phi(\rho)\parallel\sigma):\Phi~\textmd{bi-stochastic}\}
$$
Thus we have
$$
\displaystyle\max_{\Phi~ \textmd{bi-stochastic}}{S(\Phi(\rho)\parallel\sigma)}=-S(\rho)-\langle\lambda^\downarrow(\rho),\log\lambda^\uparrow(\rho)\rangle
$$

and
$$
\displaystyle\min_{\Phi~\textmd{bi-stochastic}}{S(\Phi(\rho)\parallel\sigma)}=
-S(\rho)-\langle\lambda^\downarrow(\rho),\log\lambda^\downarrow(\rho)\rangle.
$$
If we denote
$$
M(\rho,\sigma)\defeq \{S(\Phi(\rho)\parallel\sigma):\Phi~\textmd{bi-stochastic}\}
$$
Similarly,we have
$$
M(\rho,\sigma)=[-S(\rho)-\langle\lambda^\downarrow(\rho),\log\lambda^\downarrow(\rho)\rangle,
-S(\rho)-\langle\lambda^\downarrow(\rho),\log\lambda^\uparrow(\rho)\rangle].
$$

%=====================================================================================================================

%=============================================================================%
\subsection*{Acknowledgement}
%=============================================================================%

The author would like to thank the anonymous referee(s) for valuable
comments and to thank  *** for pointing out some misprints
appearing in this paper.The author is also grateful for funding from
Hangzhou Dianzi University .

%=====================================================================================================================

\end{document}